\title{\LARGE \bf
Stochastic Optimal Power Flow in Distribution Grids under Uncertainty from State Estimation}
\author{Miguel Picallo, Adolfo Anta and Bart De Schutter
\thanks{This project has received funding from the European Union's Horizon 2020 research and innovation programme under the Marie Skł{\l}odowska-Curie grant agreement No 675318 (INCITE).}
\thanks{M. Picallo and B. De Schutter are with the Center for Systems and Control, Delft University of Technology, The Netherlands
{\tt\small \{m.picallocruz,b.deschutter\}@tudelft.nl}}%
\thanks{A. Anta is with Austrian Institute of Technology 
{\tt\small Adolfo.Anta@ait.ac.at}}%
}
\newtheorem{rem}{Remark}
\newtheorem{defi}{Definition}
\newtheorem{thm}{Theorem}
\newcommand{\rb}{\textcolor{black}}
\begin{document}


\maketitle

\begin{abstract}
The increasing amount of controllable generation and consumption in distribution grids poses a severe challenge in keeping voltage values within admissible ranges. Existing approaches have considered different optimal power flow formulations to regulate distributed generation and other controllable elements. Nevertheless, distribution grids are characterized by an insufficient number of sensors, and state estimation algorithms are required to monitor the grid status. We consider in this paper the combined problem of optimal power flow under state estimation, where the estimation uncertainty results into stochastic constraints for the voltage magnitude levels instead of deterministic ones. To solve the given problem efficiently and to bypass the lack of load measurements, we use a linear approximation of the power flow equations. Moreover, we derive a transformation of the stochastic constraints to make them tractable without being too conservative. A case study shows the success of our approach at keeping voltage within limits, and also shows how ignoring the uncertainty in the estimation can lead to voltage level violations.
\end{abstract}

\IEEEpeerreviewmaketitle

\section{Introduction}
The increased share of distributed generation and controllable loads presents many advantages for the distribution grid but at the same time requires new techniques and approaches to guarantee a proper operation of the grid. Traditional strategies for distribution grids rely on the so-called \textit{fit-and-forget} policy, where most design questions are solved during the planning stage \cite{ekanayake2012smart}. However, profiles for renewable generation and electric vehicles are hard to predict and an offline solution that ignores the actual state of the system would lead to a very inefficient and even dangerous operation of the grid, where operating requirements may be violated. To face these new challenges, optimal power flow (OPF) strategies commonly used for the transmission grid, are being adapted to distribution grids. 


However, traditional simplifications in transmission grids like the decoupled fast power flow \cite{stott1974fast} are not suited for distribution grids, given the presence of coupled phases, unbalanced loads, and lower $X/R$ ratios in these types of grids. The OPF problem is a non-convex NP-hard problem \cite{lavaei2012zeroduality}. Some convexification strategies use a linear approximation around a given operating point \cite{anese2016optimal, hauswirth2017online}. Other use a semidefinite programming reformulation  \cite{dall2013distributed} in order to avoid a linear approximation, but require a rank relaxation in order to be convex.

Moreover, the solution of the OPF is naturally dependent on the state of the grid (voltages, currents, loads, etc.), which at the distribution level is only partially known. Indeed, while enough sensors are usually available in transmission grids, this is not the case for distribution grids, where state estimation (SE) algorithms \cite{abur2004power, monticelli2000electric} rely on load/generation forecasts, grid topology knowledge, and a relatively low number of measurements in order to identify the actual grid status \cite{schenato2014bayesian, picallo2017twostepSE}. These SE algorithms provide an estimation of the variables of interest (e.g. grid voltages), with a certain degree of uncertainty. Ignoring this uncertainty in the SE estimates could lead to voltage limit violations. However, many OPF formulations assume that the values of the voltages or loads across the network are available \cite{hauswirth2017online, dall2013distributed}. In \cite{anese2016optimal} only a few measurements are deployed, but then the constraints on the operating limits are required only in the nodes with measurements and not in the rest. \rb{Some papers introduce chance constrained optimization methods to account for the uncertainty in the loads and generation, like using convex relaxations \cite{summers2015stochopf, dall2017chance}, or an scenario based approach enabling the possibility to add real-time sensors in \cite{bolognani2017fast}. But these methods may be suboptimal and not suited for real-time operation of large systems with many loads, since they may either require introducing a large number of constraints \cite{bolognani2017fast}, or a considerable amount of sampling as well as computing expectations from many probability distributions in each optimization step \cite{dall2017chance}.}

In this work we start considering a standard formulation of an OPF problem, where the controllable elements are a set of distributed generation sources and tap changers in distribution transformers. \rb{Instead of load measurements \cite{hauswirth2017online, dall2013distributed}, since they may not be available}, we consider an \rb{estimate from a} SE as input to our OPF problem; and thus the voltage variables are described as stochastic signals, which lead to stochastic constraints for the OPF problem. Our approach relies on a linear approximation of the power flow equations around the operating point, which allows to bypass the lack of load measurements. Additionally, we reformulate the stochastic constraints using the probability distribution of the voltages given by the SE algorithm, to make the problem tractable without being too conservative. Our framework is not limited to one particular SE algorithm, but instead we consider a generic unbiased estimate with a Gaussian distribution and a known covariance matrix. \rb{Thus, our main contributions are the use of a generic SE algorithm for the OPF problem, to avoid requiring full load measurements; and a transformation of the resulting stochastic constraints produced by the uncertainty in the voltage estimates, to guarantee the voltage limits in all nodes, even those without measurements, as opposed to \cite{anese2016optimal}.}

The rest of the paper is distributed as follows: Section~\ref{sec:grid} defines the grid model, while Section~\ref{sec:SE} describes the standard outcome of an SE algorithm, which would be considered as input to our OPF formulation. Section~\ref{sec:opf} formulates the OPF problem including the stochastic constraints, and states the main contribution of this paper. Finally, a 123-bus test feeder is considered in Section~\ref{sec:sim} to show the perils of ignoring the uncertainty coming from the estimation step and the effectiveness of the proposed approach to solve this.

\section{Distribution Grid Model}\label{sec:grid}
A distribution grid consists of buses, where power is injected or consumed, and branches, each connecting two buses. This system can be modeled as a graph $\mathcal{G}=(\mathcal{V},\mathcal{E},\mathcal{W})$ with nodes $\mathcal{V}=\{1,...,N_\text{bus}\}$ representing the buses, edges $\mathcal{E}=\{(v_i,v_j)\mid v_i,v_j \in \mathcal{V}\}$ representing the branches, and edge weights $\mathcal{W}=\{w_{i,j}\mid (v_i,v_j) \in \mathcal{E}, w_{i,j}\in \mathbb{C}\}$ representing the admittance of a branch, which is determined by the length and type of the line cables.

In 3-phase networks buses may have up to 3 phases, so that the voltage at bus $i$, with $n_{\phi,i}\leq 3$ phases, is $V_{\text{bus},i} \in \mathbb{C}^{n_{\phi,i}}$ (and the edge weights $w_{i,j}\in \mathbb{C}^{n_{\phi,i} \times n_{\phi,j}}$). The state of the network is then typically represented by the vector bus voltages $V_\text{bus}=[V_\text{src}^T, \; V^T]^T \in \mathbb{C}^{N+3}$, where $V_{\text{src}} \in \mathbb{C}^3$ denotes the known voltage at the source bus, and $V \in \mathbb{C}^N$ the voltages in the non-source buses, where $N$ depends on the number of buses and phases per bus.

Using the Laplacian matrix $Y \in \mathbb{C}^{(N+3) \times (N+3)}$ of the weighted graph $\mathcal{G}$, called admittance matrix \cite{abur2004power}, the power flow equations to compute the currents $I$ and the power loads $S$ are:
\begin{equation}\label{eq:PFeq}\arraycolsep=1pt
\begin{array}{c}
\left[\begin{array}{c} I_{\text{src}} \\ I \end{array}\right] = 
Y\left[\begin{array}{c} V_{\text{src}} \\ V \end{array}\right],  \; S =P+jQ= \text{diag}(\bar{I})V
\end{array}
\end{equation}
where $j$ is the imaginary unit, $(P,Q)$ the active and reactive loads, $\bar{(\cdot)}$ denotes the complex conjugate, $\text{diag}(\cdot)$ represents the diagonal operator, converting a vector into a diagonal matrix. 

\section{State Estimation}\label{sec:SE}
We consider a standard SE algorithm \cite{abur2004power, monticelli2000electric} that provides an unbiased estimation of the network voltages denoted as $V_\text{est} \in \mathbb{C}^N$, $V_\text{est,rect} = [\Re\{V_\text{est}\}^T, \Im\{V_\text{est}\}^T]^T \in \mathbb{R}^{2N}$ in rectangular coordinates, and a covariance matrix representing its uncertainty in rectangular coordinates $\Sigma_\text{est,rect} \in \mathbb{R}^{2N\times 2N}$. This uncertainty is mainly caused \rb{by the use of highly uncertain pseudo-measurements, such as load predictions, to compensate for the lack of measurements} \cite{schenato2014bayesian, picallo2017twostepSE}. The true voltages $V_\text{prev}$, $V_\text{prev,rect} = [\Re\{V_\text{prev}\}^T, \Im\{V_\text{prev}\}^T]^T$, can then be expressed as:
\begin{equation}\label{eq:Vprev}
V_\text{prev,rect} \sim \mathcal{N}(V_\text{est,rect},\Sigma_\text{est,rect}) = V_\text{est,rect} + \Sigma_\text{est,rect}^{\frac{1}{2}}\mathcal{N}(0,I_\text{d})
\end{equation}
where $I_\text{d}$ represents the identity matrix. These voltages $V_\text{prev}$ denote the previous voltages before solving the OPF and applying the new control set points.

This uncertainty represented in $\Sigma_\text{est,rect}$ is especially relevant when considering distribution networks, where only few measurements are available and the SE algorithm needs to rely on noisy load predictions \cite{picallo2017twostepSE}. Given that $V_\text{prev}$ is not available, only $V_\text{est}$ can be used to regulate distributed generation sources and other controllable elements at the distribution level. 

\begin{rem}
For convenience, we have considered the estimation in rectangular variables, but if the SE provides the results in polar variables, the covariance in rectangular coordinates could still be estimated using the Jacobian of the mapping from polar to rectangular coordinates.
\end{rem}

\section{Optimal Power Flow}\label{sec:opf}

The OPF problem seeks to regulate the controllable elements in the network in order to optimize its operation under some safety conditions. This optimization typically focuses on minimizing costs, energy loses, etc.; the controllable elements are distributed energy sources, tap changers, batteries, flexible loads, network configuration, etc.; and the safety conditions are typically voltage and current limits on buses and connections. 

In this paper, we consider as controllable elements the distributed generation sources at the distribution level $\{(P_i,Q_i)\mid i\in \mathcal{V}_\text{ren}\}$, where $\mathcal{V}_\text{ren}$ denotes the set of nodes with distributed renewable energy sources; and the set points of the voltage tap changers for every phase $\phi$ in the transformers $a_{\text{tap},\phi} \in \{a_{\text{tap},\min},...,a_{\text{tap},\max} \}$. For simplicity, the objective is to minimize the total amount of energy required from the substation $S_\text{src}$, and thus to minimize the cost of external energy required and to prioritize the renewable energy generated within the network. Other similar OPF formulations, with e.g. other possible objectives, can be addressed with our solution, \textit{mutatis mutandis}. The safety conditions are given by the limits for the voltage magnitudes $\abs{V}$. Moreover, the power flow equations in \eqref{eq:PFeq} represent another algebraic constraint for the OPF problem. In this work we do not consider dynamic elements like batteries; so we can solve the following OPF problem at every instance:
\begin{defi} Standard OPF:
\begin{subequations}\label{eq:opf}
\begin{align}\arraycolsep=1pt
&\begin{array}{l}
\mbox{Objective: }
\min \sum_{\phi} P_{\text{src},\phi} + Q_{\text{src},\phi}
\end{array}\label{eq:opfobj}
\\
&\begin{array}{l}
\mbox{Constraints:}\\
\mbox{Power flow:}\\ \;
\left[\begin{array}{c} S_\text{src} \\ S
 \end{array}\right] = 
\text{diag}\left(\left[\begin{array}{c} V_\text{src} \\ 
V \end{array}\right]\right)
\bar{Y}(a_\text{tap})
\left[\begin{array}{c} \bar{V_\text{src}} \\ \bar{V} \end{array}\right]
\end{array}\label{eq:opfpf}
\\
&\begin{array}{l}
\mbox{Tap changers: } \\ \;
a_{\text{tap},\phi} \in \{a_{\text{tap},\min},...,a_{\text{tap},\max} \} , \; \forall \phi \in \{1,2,3\}
\end{array}\label{eq:opftap}
\\
&\begin{array}{l}
\mbox{Available energy:}\\ \;
P_{\min,i} \leq P_i\leq P_{\max,i}, \; \forall i \in \mathcal{V}_\text{ren} \\ \;
Q_{\min,i} \leq Q_i \leq Q_{\max,i}, \; \forall i \in \mathcal{V}_\text{ren} \\ \;
\abs{S_i} \leq \abs{S}_{\max,i}, \; \forall i \in \mathcal{V}_\text{ren}
\end{array}\label{eq:opfenergy}
\\
&\begin{array}{l}
\mbox{Voltage limits: }\\ \;
\abs{V}_{\min} \leq \abs{V_i} \leq \abs{V}_{\max}, \; \forall i \in \{1,\dots,N\}
\end{array}\label{eq:opfvlim}
\end{align}
\end{subequations}
where $\abs{V}_{\max},\abs{V}_{\min}$ denote the voltage magnitude limits, $P_{\max,i}, Q_{\max,i}, P_{\min,i}, Q_{\min,i}, \abs{S}_{\max,i}$ denote the available energy limits at node $i$, and $Y(a_\text{tap})$ denotes the admittance matrix as a function of the vector of voltage tap changers $a_\text{tap}$. The variables to optimize are then the power supplied by the substation and the renewable energy sources $S_\text{src}$, $\{(P_i,Q_i) \mid i \in \mathcal{V}_\text{ren}\}$, the voltage tap changers $a_\text{tap}$, and the voltages $V$ and $V_\text{src}$. Among the variables, the control elements are $a_\text{tap}$, $\{(P_i,Q_i) \mid i \in \mathcal{V}_\text{ren}\}$ and $V_\text{src}$, while $S_\text{src}$ and $V$ are determined by the constraints. The loads at the rest of the nodes $\{(P_i,Q_i) \mid i \notin \mathcal{V}_\text{ren}\}$ are inputs to the OPF problem, and are typically measured or estimated.
\end{defi}

There are some problems with the OPF in \eqref{eq:opf}:
\begin{itemize}
\item The power flow equation in \eqref{eq:opfpf} is nonlinear and thus difficult to handle.
\item The problem is not convex due to \eqref{eq:opfpf}, and also due to the lower limit in \eqref{eq:opfvlim} if considering rectangular coordinates. However, we would like to have a convex problem in order to guarantee optimality.
\item In \eqref{eq:opfpf}, both the loads in the nodes other than the substation and the renewable sources: $\{(P_i,Q_i) \mid i \notin \mathcal{V}_\text{src}\cup \mathcal{V}_\text{ren}\}$, where $\mathcal{V}_\text{src}$ denotes the set of nodes in the source bus, and the voltages $V$ in the nodes other than the source nodes are not known, since we assume that we have a distribution network with few sensors and only a voltage \rb{estimate is provided by the SE}. So only $V_\text{src}$ plus some other measurements are known.
\item Since there is a degree of uncertainty in the SE estimates represented in the covariance matrix $\Sigma_\text{est,rect}$, we need to take it into account for the voltage limits in \eqref{eq:opfvlim}.
\item The discrete variable $a_\text{tap}$ in \eqref{eq:opftap} converts the given problem into an integer problem.
\end{itemize}

\rb{
\begin{rem}
For simplicity, we are not considering in \eqref{eq:opf} the thermal constraints limiting the amount of current through the lines:
$ \abs{Y_\text{lines}V} \leq \abs{I}_\text{thermal}$,
where $Y_\text{lines}$ would be the line admittance matrix mapping voltages to line currents, and $\abs{I}_\text{thermal}$ the vector of maximum values allowed. Nonetheless, these constraints define a convex region on $V$, and therefore could be easily included.
\end{rem}
}

\subsection{Transformer Approximation}
In order to include the tap changers more efficiently and to simplify $Y(a_\text{tap})$ in \eqref{eq:opfpf}, we assume electrical isolation at the transformers and thus consider different subsystems related by the tap changers equations similar to \cite{robbins2016optimal}. For simplification, we consider a system with only one controllable transformer with tap changers, and we assume that the ordering of nodes in \eqref{eq:PFeq}, in $Y$ and $V$, is already such that the nodes of the first subsystem appear first, and then those of the second, with the nodes connected to the transformer appearing one after the other, so that we have:

\begin{equation}\label{eq:trafapprox}\arraycolsep=1.1pt
\begin{array}{c}
V = [V_\text{sys1}^T, V_\text{tf1}^T,
V_\text{tf2}^T, V_\text{sys2}^T] \\[0.1cm]
Y = Y_\text{isol} + 
\left[\begin{array}{cccc} 
0 & 0 & 0 & 0 \\
0 & Y_\text{tf} & -Y_\text{tf}\frac{1}{a_\text{tap}} & 0 \\
0 & -Y_\text{tf}\frac{1}{a_\text{tap}} & Y_\text{tf}\frac{1}{a_\text{tap}^2} & 0 \\
0 & 0 & 0 & 0 \end{array}\right], \;
Y_\text{isol} = \left[\begin{array}{cc}
Y_\text{sys1} & \begin{array}{cc}0 & 0 \\ 0 & 0 \end{array} \\
\begin{array}{cc}0 & 0 \\ 0 & 0 \end{array} & Y_\text{sys2}
\end{array}\right]
\end{array}
\end{equation}
where $V_\text{sys-}$ represent the voltages of each subsystem except the nodes of the transformer, $V_\text{tf-}$ correspond to the nodes of the transformer for each subsystem, $Y_\text{isol}$ is the admittance with isolated subsystems, $Y_\text{sys-}$ represent the admittance matrix of each subsystem, and $Y_\text{tf}$ is the admittance matrix of the transformer. Then the power flow equations can be expressed as:

\begin{equation}\arraycolsep=1pt
\begin{array}{rl}
\left[\begin{array}{c} S_\text{src} \\ S \end{array}\right] = &
\text{diag}\left(\left[\begin{array}{c} V_\text{src} \\ V \end{array}\right]\right)
\bar{Y}_\text{isol}
\left[\begin{array}{c} \bar{V}_\text{src} \\ \bar{V} \end{array}\right]
 \\[0.1cm] 
V_{\text{tf2}} = & \text{diag}(a_\text{tap}) V_{\text{tf1}} \\
0 = & S_{\text{tf2}} + S_{\text{tf1}}
\end{array}
\end{equation}

We have disregarded the admittance of the transformer $Y_\text{tf}$ in \eqref{eq:trafapprox}, assuming that it is large enough and that the voltage drop can be neglected. If needed, it could be included using an artificial node as in \cite{robbins2016optimal}. To further simplify \eqref{eq:opftap}, we will also consider a continuous tap changer for every phase $\phi$, $a_{\text{tap},\phi} \in [a_{\text{tap},\min},a_{\text{tap},\max} ]$ instead of $a_{\text{tap},\phi} \in \{a_{\text{tap},\min},...,a_{\text{tap},\max} \}$. Its values could afterwards be rounded to the closest discrete value as proposed in \cite{robbins2016optimal}.

\begin{rem}
This transformer approximation could easily be extended to a system with more controllable transformers by splitting the system in more subsystems.
\end{rem}

\subsection{Power Flow Approximation}
Instead of considering the power flow equations in \eqref{eq:opfpf} directly, we consider a first-order linear approximation similar to the one in \cite{anese2016optimal} around the estimated voltage states $V_\text{est}$ and the known voltages at the source nodes $V_\text{src,prev}$, both prior to the optimization step: 

\begin{equation}\label{eq:PFaprox}\arraycolsep=1pt
\begin{array}{rl}
\left[\begin{array}{c} \Delta S_\text{src} \\[0.1cm] \Delta S \end{array}\right] = &  
\text{diag}\left(\left[\begin{array}{c} \Delta V_\text{src} \\ \Delta V \end{array}\right]\right)

\bar{Y}_\text{isol}
\left[\begin{array}{c} \bar{V}_\text{src,prev} \\ \bar{V}_\text{est} \end{array}\right] \\[0.1cm]

& +  
\text{diag}\left(\left[\begin{array}{c} V_\text{src,prev} \\ V_\text{est} \end{array}\right]\right)
\bar{Y}_\text{isol}
\left[\begin{array}{c} \bar{\Delta V}_\text{src} \\ \bar{\Delta V} \end{array}\right] \\[0.1cm]

\Delta V_{\text{tf2}} = & \text{diag}(a_\text{tap,prev}) \Delta V_{\text{tf1}} + \text{diag}(V_{\text{tf1},\text{prev}}) \Delta a_\text{tap} \\[0.1cm]

0 = & \Delta S_{\text{tf1}} + \Delta S_{\text{tf2}} 
\end{array}
\end{equation}
where $\Delta S, \Delta V, \Delta V_\text{src}, \Delta a_\text{tap}$ represent the deviations of values after the optimization process; $S_\text{prev}, V_\text{prev}, V_\text{src,prev}, a_\text{tap,prev}$ denote the values before applying the new set points produced by the optimization step; and $S, V, V_\text{src}, a_{\text{tap}}$ are the values after the optimization step, so we have: 
\begin{equation*}
\begin{array}{c}
\Delta S=S-S_\text{prev}, \Delta V=V-V_\text{prev}, \\
\Delta V_\text{src}=V_\text{src}-V_\text{src,prev}, 
\Delta a_\text{tap} = a_{\text{tap}} - a_{\text{tap,prev}}
\end{array}
\end{equation*}
With this approximation we have a second-order error of the type $\text{diag}(\Delta V)\bar{Y} \bar{\Delta V}$ and $\Delta V \Delta a_\text{tap}$, which we consider negligible since deviations are expected to be small, \rb{because they will be constrained by the voltage magnitude constraints \eqref{eq:opfvlim}}. Note that \eqref{eq:PFaprox} is not linear in the decision variables due to the complex conjugates $\bar{\Delta V}_\text{src}, \bar{\Delta V}$. Note that the equation $\Delta S_{\text{tf1}} + \Delta S_{\text{tf2}} = 0$ is sufficient to imply $S_{\text{tf1}} + S_{\text{tf2}} = 0$ in \eqref{eq:PFaprox}, since  $S_{\text{tf1,prev}} + S_{\text{tf2,prev}} = 0$ is already satisfied by $V_\text{prev}$.

\begin{rem}
We consider that the optimization process is fast enough, so that the loads remain constant. Therefore we have $\Delta S_i=0$ for $i \notin \mathcal{V}_\text{src}\cup \mathcal{V}_\text{ren} \cup \mathcal{V}_\text{tf1} \cup \mathcal{V}_\text{tf2}$, where $\mathcal{V}_\text{tf1}, \mathcal{V}_\text{tf2}$ denote the set of nodes on the primary and secondary sides of the transformer respectively, and we do not need to measure or estimate the values of the loads. This is very relevant, since it allows to bypass the lack of load measurements. To achieve this fast optimization process, we can use an early stopping optimization or the recent work on projected gradient online optimization methods for the OPF problem in \cite{anese2016optimal, hauswirth2017online}. Moreover, this fast optimization allows to avoid outdated setpoints and thus suboptimal solutions as remarked in \cite{anese2016optimal}.
\end{rem}

Furthermore, if we consider a rectangular representation of the variables we can express \eqref{eq:PFaprox} linearly on the decision variables, $\Delta a_{\text{tap}}$ and the real and imaginary parts of $\Delta S, \Delta V$:

\begin{equation}\label{eq:PFapproxRect}
\arraycolsep=1pt
\begin{array}{rl}
\left[\begin{array}{c} 
\Delta P_\text{src} \\
\Delta P \\
\Delta Q_\text{src} \\ 
\Delta Q 
\end{array}\right] = & M
\left[\begin{array}{c} 
\Re\{\Delta V_\text{src}\} \\ 
\Re\{\Delta V\} \\
\Im\{\Delta V_\text{src}\} \\ 
\Im\{\Delta V\}
\end{array}\right]
\\[0.8cm]
\left[\begin{array}{c} 
\Re\{\Delta V_{\text{tf2}}\} \\
\Im\{\Delta V_{\text{tf2}}\} 
\end{array}\right] 
= &  a_\text{tap,prev}
\left[\begin{array}{c} 
\Re\{\Delta V_{\text{tf1}}\} \\
\Im\{\Delta V_{\text{tf1}}\} 
\end{array}\right] +
\Delta a_\text{tap}
\left[\begin{array}{c} 
\Re\{V_{\text{tf1},\text{prev}}\} \\
\Im\{V_{\text{tf1},\text{prev}}\} 
\end{array}\right] 
\end{array}
\end{equation}
where $M$ depends only on $V_\text{est}$ and $V_\text{src,prev}$, both known, and can be expressed as:
\rb{
\begin{equation*}
M = 
\left[\begin{array}{cc}
\Re\{A\}+\Re\{B\} & -\Im\{A\}+\Im\{B\} \\
\Im\{A\}+\Im\{B\} & \Re\{A\}-\Re\{B\}
\end{array}\right]
\end{equation*}
with
\begin{equation*}\arraycolsep=1pt
A = \text{diag}\left(\bar{Y}_{\text{isol}}
\left[\begin{array}{c}
\bar{V}_\text{src,prev} \\ \bar{V}_\text{est}
\end{array}\right]\right), 
B = \text{diag}\left(\left[\begin{array}{c}
V_\text{src,prev} \\ V_\text{est}
\end{array}\right] \right)\bar{Y}_{\text{isol}}
\end{equation*}
}


\subsection{Stochastic Voltage Limits}
Since we only have an estimation of the voltage previous to the optimization step \eqref{eq:Vprev}, the voltage after the optimization step $V$, $V_\text{rect} = [\Re \{V\}^T, \Im \{V\}^T]^T$, will be a prediction with a covariance:
\begin{equation}\label{eq:Vdistrib}
V_\text{rect} \sim \mathcal{N}(\Delta V_\text{rect} + V_\text{est,rect},\Sigma_\text{est,rect})
\end{equation} 

Since $V$ is then unknown, we cannot set a deterministic voltage limit constraint like \eqref{eq:opfvlim}. Therefore, we use a stochastic one instead:
\begin{equation}\label{eq:stochconstr}\begin{array}{c}
P(\abs{V}_{\min} \leq \abs{V_i} \leq \abs{V}_{\max}) \geq \beta \; \forall i 
\end{array}
\end{equation}
where $\beta$ is a desired threshold probability level, like $95\%$, \rb{which can be tuned to increase the confidence level of the constraint}. This constraint converts the OPF problem into a chance constrained optimization problem. We use the following theorem to reformulate \eqref{eq:stochconstr} as a function of our decision variables $\Delta V_\text{rect}$, our estimation $V_\text{est,rect}$, and $\Sigma_\text{est,rect}$ in \eqref{eq:Vprev}. We use $\Sigma_{\text{est,rect},\Re}$ and $\Sigma_{\text{est,rect},\Im}$ to denote the blocks corresponding to the covariance of the real and imaginary parts respectively.

\begin{thm}\label{thm:vlim}
For all $\beta \in (0,1)$, there exists $\alpha$ such that if the following constraints holds for all $i$:
\begin{equation}\arraycolsep=1pt\label{eq:detconstr}
\begin{array}{rl}
(\Re \{\Delta V_i\} + \Re \{V_{\text{est},i}\} \pm \alpha (\Sigma_{\text{est,rect},\Re})_{i,i}^\frac{1}{2})^2 & \\ 
+ (\Im \{\Delta V_i\} + \Im \{V_{\text{est},i}\} \pm \alpha (\Sigma_{\text{est,rect},\Im})_{i,i}^\frac{1}{2})^2 & 
\leq \abs{V}_{\max}^2  
\\[0.2cm]
(\Re \{\Delta V_i\} + \Re \{V_{\text{est},i}\} \pm \alpha (\Sigma_{\text{est,rect},\Re})_{i,i}^\frac{1}{2}) 
\frac{\Re \{V_{\text{est},i}\} }{\abs{V_{\text{est},i}}} & \\
+ (\Im \{\Delta V_i\} + \Im \{V_{\text{est},i}\} \pm \alpha (\Sigma_{\text{est,rect},\Im})_{i,i}^\frac{1}{2}) 
\frac{\Im \{V_{\text{est},i}\} }{\abs{V_{\text{est},i}}} & 
\geq \abs{V}_{\min} 
\end{array}
\end{equation}
then \eqref{eq:stochconstr} is satisfied. We use $\pm$ to denote all possible combinations to represent all constraints. 

Furthermore, $\alpha$ can be found using standard tables for  Gaussian distributions by choosing $\alpha$ such that 
\begin{equation}\label{eq:choosealpha}
P(\abs{\tilde{\omega}}\geq \alpha) \leq \frac{1-\beta}{4} \mbox{, for } \tilde{\omega}\sim \mathcal{N}(0,1)
\end{equation}

\end{thm}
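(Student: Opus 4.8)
The plan is to prove the implication node-by-node, exploiting that by \eqref{eq:Vdistrib} the real and imaginary parts of $V_i$ are marginally Gaussian: $\Re\{V_i\}$ has mean $\Re\{\Delta V_i\}+\Re\{V_{\text{est},i}\}$ and variance $(\Sigma_{\text{est,rect},\Re})_{i,i}$, and likewise for $\Im\{V_i\}$ with $(\Sigma_{\text{est,rect},\Im})_{i,i}$. First I would introduce the standardized deviations $\tilde\omega_{i,\Re},\tilde\omega_{i,\Im}$, each marginally $\mathcal{N}(0,1)$, and the box event $E_i=\{\abs{\tilde\omega_{i,\Re}}\leq\alpha\}\cap\{\abs{\tilde\omega_{i,\Im}}\leq\alpha\}$. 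On $E_i$ the point $V_i$ is confined to the axis-aligned rectangle in the complex plane whose four vertices are exactly the points appearing with the four sign choices $\pm$ in \eqref{eq:detconstr}. The argument then reduces to showing that, once \eqref{eq:detconstr} holds at all four vertices, the whole rectangle lies in the admissible annulus $\abs{V}_{\min}\leq\abs{V_i}\leq\abs{V}_{\max}$, after which it remains only to lower-bound $P(E_i)$.

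For the upper bound I would invoke convexity: the set $\{\abs{\cdot}\leq\abs{V}_{\max}\}$ is a disk, hence convex, so if the first inequality of \eqref{eq:detconstr} certifies that all four vertices lie in this disk, then the disk contains their convex hull, i.e.\ the rectangle, and thus $\abs{V_i}\leq\abs{V}_{\max}$ holds everywhere on $E_i$. The lower bound is the genuinely non-convex part and is the crux of the argument. Here I would replace $\abs{V_i}\geq\abs{V}_{\min}$ by its supporting-hyperplane surrogate along the direction $\hat{u}_i=V_{\text{est},i}/\abs{V_{\text{est},i}}$: by Cauchy--Schwarz, $\abs{V_i}\geq\langle V_{i,\text{rect}},\hat{u}_i\rangle$, and the right-hand side is precisely the affine functional appearing in the second inequality of \eqref{eq:detconstr}. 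Since an affine functional attains its minimum over a rectangle at a vertex, requiring it to be $\geq\abs{V}_{\min}$ at all four vertices forces it to be $\geq\abs{V}_{\min}$ on the entire rectangle, hence $\abs{V_i}\geq\abs{V}_{\min}$ on $E_i$. Combining the two inclusions gives $E_i\subseteq\{\abs{V}_{\min}\leq\abs{V_i}\leq\abs{V}_{\max}\}$.

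It then remains to choose $\alpha$ so that $P(E_i)\geq\beta$. I would split the failure event and apply the union bound twice. Writing $A=\{\abs{V_i}\leq\abs{V}_{\max}\}$ and $B=\{\abs{V_i}\geq\abs{V}_{\min}\}$, both of which contain $E_i$ by the previous step, we get $P(A^c\cup B^c)\leq P(A^c)+P(B^c)\leq 2P(E_i^c)$, and a second union bound over the two coordinates gives $P(E_i^c)\leq P(\abs{\tilde\omega_{i,\Re}}>\alpha)+P(\abs{\tilde\omega_{i,\Im}}>\alpha)=2P(\abs{\tilde\omega}\geq\alpha)$ for $\tilde\omega\sim\mathcal{N}(0,1)$. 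Hence $P(A\cap B)\geq 1-4P(\abs{\tilde\omega}\geq\alpha)$, and selecting $\alpha$ as in \eqref{eq:choosealpha} makes this $\geq\beta$, which is exactly \eqref{eq:stochconstr}; this double union bound is what produces the factor $\tfrac{1-\beta}{4}$. A point worth recording is that only the diagonal marginal variances enter, because each $\tilde\omega_{i,\cdot}$ is standard normal irrespective of the off-diagonal correlations in $\Sigma_{\text{est,rect}}$, so those cross-correlations are simply discarded by the union bound.

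The step I expect to be the main obstacle is the lower bound, since $\{\abs{V_i}\geq\abs{V}_{\min}\}$ is non-convex and cannot be certified vertex-wise the way the upper bound can. The supporting-hyperplane/Cauchy--Schwarz surrogate resolves this at the cost of some conservativeness, as it replaces the circular lower boundary by its tangent line at $\abs{V}_{\min}\hat{u}_i$; one should note that taking the tangent direction to be that of the estimate $V_{\text{est},i}$ keeps the surrogate tight near the nominal operating point. The only remaining care is the bookkeeping of the four $\pm$ combinations, ensuring they map to the four rectangle vertices consistently in both inequalities.
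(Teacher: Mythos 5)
Your proof is correct and follows essentially the same route as the paper's: confine the estimation noise to the coordinate-aligned box, certify the magnitude constraints on the whole box from the four $\pm$ vertices (convexity of the disk for the upper bound, the tangent-line/Cauchy--Schwarz surrogate through $V_{\text{est},i}/\abs{V_{\text{est},i}}$ for the lower bound), and pick $\alpha$ via a union bound over the two coordinates. The only difference is bookkeeping: the paper first splits the two-sided constraint into two one-sided conditions at level $\tfrac{1+\beta}{2}$ and bounds each by the box probability, whereas your chain $P(A^c)+P(B^c)\leq 2P(E_i^c)\leq 4P(\abs{\tilde{\omega}}\geq\alpha)$ is slightly looser than the direct inclusion $E_i\subseteq A\cap B$ (which already gives $P(A\cap B)\geq 1-2P(\abs{\tilde{\omega}}\geq\alpha)$), but it still meets the theorem's threshold $\tfrac{1-\beta}{4}$.
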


\begin{proof}
We first split the constraint in \eqref{eq:stochconstr} into two independent conditions for $V_{\min}$ and $V_{\max}$:
\begin{equation*}\begin{array}{rl}
& P(\abs{V}_{\min} \leq \abs{V_i} \leq \abs{V}_{\max}) \\
= & P\big((\abs{V}_{\min} \leq \abs{V_i}) \cap (\abs{V_i} \leq \abs{V}_{\max})\big) \\
= & 1-P\big((\abs{V}_{\min} \geq \abs{V_i}) \cup (\abs{V_i} \geq \abs{V}_{\max})\big) \\
= & 1-P(\abs{V}_{\min} \geq \abs{V_i})-P(\abs{V_i} \geq \abs{V}_{\max})\\
& +P\big((\abs{V}_{\min} \geq \abs{V_i}) \cap (\abs{V_i} \geq \abs{V}_{\max})\big) \\
= & 1-P(\abs{V}_{\min} \geq \abs{V_i})-P(\abs{V_i} \geq \abs{V}_{\max}) \\
= & P(\abs{V}_{\min} \leq \abs{V_i})+P(\abs{V_i} \leq \abs{V}_{\max}) - 1
\end{array}
\end{equation*}
so that
\begin{equation}\label{eq:stochconstrsep}
\begin{array}{c}
P(\abs{V}_{\min} \leq \abs{V_i}) \geq \frac{1+\beta}{2} \mbox{ and } P(\abs{V_i} \leq \abs{V}_{\max}) \geq \frac{1+\beta}{2}
\\
\implies P(\abs{V}_{\min} \leq \abs{V_i} \leq \abs{V}_{\max}) \geq \beta	
\end{array}
\end{equation}
where for simplicity we have divided $1+\beta$ by $2$, but other options would also be possible. In rectangular coordinates, inequalities inside the probability in \eqref{eq:stochconstrsep} describe an annulus, which clearly is a non-convex region. Since we want to convert \eqref{eq:opfvlim} into a convex constraint, we consider the biggest convex region around the operating point \rb{$V_\text{est}$ and included in the annulus (see Fig. \ref{fig:region}):
\begin{equation}
\abs{V}_{\min} \leq \Re \{V_i\} \frac{\Re \{V_{\text{est},i}\}}{\abs{V_{\text{est},i}}} +\Im \{V_i\} \frac{\Im \{V_{\text{est},i}\}}{\abs{V_{\text{est},i}}} \Rightarrow 
\abs{V}_{\min} \leq \abs{V_i} 
\end{equation}
}
Then the lower bound in \eqref{eq:stochconstrsep} is replaced by the inequality representing the dark grey area in Fig. \ref{fig:region}:

\begin{equation}\label{eq:convexconstrmin}\arraycolsep=1pt\begin{array}{c}
P(\abs{V}_{\min} \leq \Re \{V_i\} \frac{\Re \{V_{\text{est},i}\}}{\abs{V_{\text{est},i}}} +\Im \{V_i\} \frac{\Im \{V_{\text{est},i}\}}{\abs{V_{\text{est},i}}}) \geq \frac{1+\beta}{2}
\end{array}
\end{equation}

\begin{figure}
\centering
\includegraphics[width=1.5in,height=1.1in]{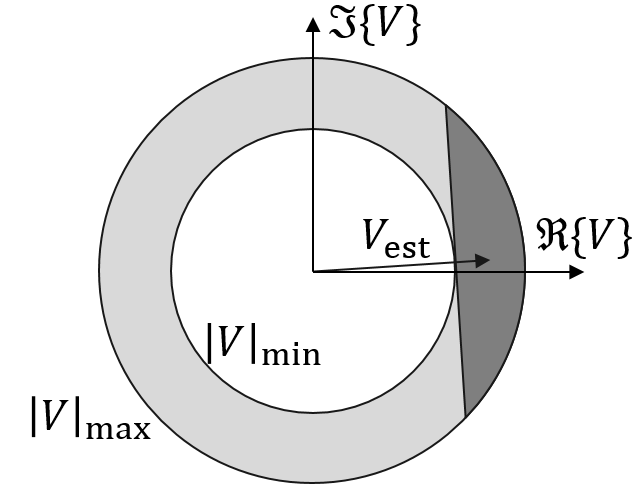}    
\caption{Annulus non-convex region in light grey, with the biggest convex region around the operating point $V_\text{est}$ marked in dark grey.} 
\label{fig:region}
\end{figure}

Considering the estimation covariance in rectangular coordinates $\Sigma_\text{est,rect}$, using \eqref{eq:Vdistrib} we can expand $V_i$ \rb{as a function of the voltage deviations of our power flow formulation  \eqref{eq:PFapproxRect}}:
\begin{equation}\label{eq:Vexp}\arraycolsep=1pt
\begin{array}{rl}
\Re \{V_i\} & = \Re \{\Delta V_i\} + \Re \{V_{\text{est},i}\} + \omega_{i,\Re} \\
\Im \{V_i\} & = \Im \{\Delta V_i\} + \Im \{V_{\text{est},i}\} + \omega_{i,\Im}
\end{array}
\end{equation}
where $\omega_{i,\Re},\omega_{i,\Im}$, with $[\omega_{\Re}^T,\omega_{\Im}^T]^T \sim \mathcal{N}(0,\Sigma_\text{est,rect})$, denote the noises of the real and imaginary parts of the voltage estimation at node $i$: $V_{\text{prev},i}$, see \eqref{eq:Vprev}. \rb{Using the diagonal decomposition $\Sigma_\text{est,rect} = UD^2U^T$, we can rewrite these noises as different linear combinations of the same noise vector:
\begin{equation}\label{eq:noiseexp}\arraycolsep=1pt
\begin{array}{c}
\omega_{i,\Re} = U_{\Re,i\bullet}D \omega \sim \mathcal{N}(0,(\Sigma_{\text{est,rect},\Re})_{i,i}) \\
\omega_{i,\Im} = U_{\Im,i\bullet}D \omega \sim \mathcal{N}(0,(\Sigma_{\text{est,rect},\Im})_{i,i}) \\
\mbox{with }\omega \sim \mathcal{N}(0,I_\text{d})  
\mbox{ and }  U =[U_\Re^T \; U_\Im^T]^T
\end{array}
\end{equation}
}

Provided that conditions \eqref{eq:detconstr} are satisfied, \rb{we can use the voltage expressions \eqref{eq:Vexp} to express \eqref{eq:stochconstrsep} as a function of the noises $\omega_{i,\Re},\omega_{i,\Im}$:}
\begin{equation}\label{eq:Pmax}\arraycolsep=1pt\begin{array}{rrl}
& P&(\abs{V_i}^2 \leq \abs{V}_{\max}^2) \\
\stackrel{\eqref{eq:Vexp}}{=} & P&\big((\Re \{\Delta V_i\} + \Re \{V_{\text{est},i}\} + \omega_{i,\Re})^2  + 
\\ 
& & (\Im \{\Delta V_i\} + \Im \{V_{\text{est},i}\} + \omega_{i,\Im})^2 \leq \abs{V}_{\max}^2\big) 
\\
\geq & P&\Big(\big((\Re \{\Delta V_i\} + \Re \{V_{\text{est},i}\} + \omega_{i,\Re})^2 + 
\\ 
& & (\Im \{\Delta V_i\} + \Im \{V_{\text{est},i}\} + \omega_{i,\Im})^2 \leq \abs{V}_{\max}^2\big) \cap 
\\
& & (\abs{\omega_{i,\Re}} \leq \alpha(\Sigma_{\text{est,rect},\Re})_{i,i}^\frac{1}{2}) \cap (\abs{\omega_{i,\Im}} \leq \alpha(\Sigma_{\text{est,rect},\Im})_{i,i}^\frac{1}{2})\Big) 
\\
= & P&\Big(\big((\Re \{\Delta V_i\} + \Re \{V_{\text{est},i}\} + \omega_{i,\Re})^2 + 
\\ 
& & (\Im \{\Delta V_i\} + \Im \{V_{\text{est},i}\} + \omega_{i,\Im})^2 \leq \abs{V}_{\max}^2\big) \; \Big|
\\
& & (\abs{\omega_{i,\Re}} \leq \alpha(\Sigma_{\text{est,rect},\Re})_{i,i}^\frac{1}{2}) \cap (\abs{\omega_{i,\Im}} \leq \alpha(\Sigma_{\text{est,rect},\Im})_{i,i}^\frac{1}{2})\Big) 
\\
& \cdot & P\big((\abs{\omega_{i,\Re}} \leq \alpha(\Sigma_{\text{est,rect},\Re})_{i,i}^\frac{1}{2}) \cap (\abs{\omega_{i,\Im}} \leq \alpha(\Sigma_{\text{est,rect},\Im})_{i,i}^\frac{1}{2})\big)
\\
\stackrel{\eqref{eq:detconstr}}{=} & 1\cdot & P\big((\abs{\omega_{i,\Re}} \leq \alpha(\Sigma_{\text{est,rect},\Re})_{i,i}^\frac{1}{2}) \cap (\abs{\omega_{i,\Im}} \leq \alpha(\Sigma_{\text{est,rect},\Im})_{i,i}^\frac{1}{2})\big)
\end{array}
\end{equation}
and 
\begin{equation}\label{eq:Pmin}\arraycolsep=1pt\begin{array}{rrl}
& P&(\abs{V}_{\min} \leq \Re \{V_i\} \frac{\Re \{V_{\text{est},i}\}}{\abs{V_{\text{est},i}}} + \Im \{V_i\} \frac{\Im \{V_{\text{est},i}\}}{\abs{V_{\text{est},i}}}) 
\\
\stackrel{\eqref{eq:Vexp}}{=} & P&\big((\Re \{\Delta V_i\} + \Re \{V_{\text{est},i}\} + \omega_{i,\Re})\frac{\Re \{V_{\text{est},i}\}}{\abs{V_{\text{est},i}}}  + 
\\ 
& & (\Im \{\Delta V_i\} + \Im \{V_{\text{est},i}\} + \omega_{i,\Im})\frac{\Im \{V_{\text{est},i}\}}{\abs{V_{\text{est},i}}} \geq \abs{V}_{\min}\big) 
\\
\geq & P&\Big(\big((\Re \{\Delta V_i\} + \Re \{V_{\text{est},i}\} + \omega_{i,\Re})\frac{\Re \{V_{\text{est},i}\}}{\abs{V_{\text{est},i}}}  + 
\\ 
& & (\Im \{\Delta V_i\} + \Im \{V_{\text{est},i}\} + \omega_{i,\Im})\frac{\Im \{V_{\text{est},i}\}}{\abs{V_{\text{est},i}}} \geq \abs{V}_{\min}\big) \cap 
\\
& & (\abs{\omega_{i,\Re}} \leq \alpha(\Sigma_{\text{est,rect},\Re})_{i,i}^\frac{1}{2}) \cap (\abs{\omega_{i,\Im}} \leq \alpha(\Sigma_{\text{est,rect},\Im})_{i,i}^\frac{1}{2})\Big) 
\\
= & P&\Big(\big((\Re \{\Delta V_i\} + \Re \{V_{\text{est},i}\} + \omega_{i,\Re})\frac{\Re \{V_{\text{est},i}\}}{\abs{V_{\text{est},i}}}  + 
\\ 
& & (\Im \{\Delta V_i\} + \Im \{V_{\text{est},i}\} + \omega_{i,\Im})\frac{\Im \{V_{\text{est},i}\}}{\abs{V_{\text{est},i}}} \geq \abs{V}_{\min}\big) \; \Big| 
\\
& & (\abs{\omega_{i,\Re}} \leq \alpha(\Sigma_{\text{est,rect},\Re})_{i,i}^\frac{1}{2}) \cap (\abs{\omega_{i,\Im}} \leq \alpha(\Sigma_{\text{est,rect},\Im})_{i,i}^\frac{1}{2})\Big) 
\\
& \cdot & P\big((\abs{\omega_{i,\Re}} \leq \alpha(\Sigma_{\text{est,rect},\Re})_{i,i}^\frac{1}{2}) \cap (\abs{\omega_{i,\Im}} \leq \alpha(\Sigma_{\text{est,rect},\Im})_{i,i}^\frac{1}{2})\big)
\\
\stackrel{\eqref{eq:detconstr}}{=} & 1\cdot & P\big((\abs{\omega_{i,\Re}} \leq \alpha(\Sigma_{\text{est,rect},\Re})_{i,i}^\frac{1}{2}) \cap (\abs{\omega_{i,\Im}} \leq \alpha(\Sigma_{\text{est,rect},\Im})_{i,i}^\frac{1}{2})\big)
\end{array}
\end{equation}

We can now use the distributions of $\omega_{i,\Re}$ and $\omega_{i,\Im}$ in \eqref{eq:noiseexp} to derive the way to choose $\alpha$ in \eqref{eq:choosealpha} \rb{to satisfy \eqref{eq:stochconstrsep}}. For a given threshold $\alpha$, since $\omega_{i,\Re}$ and $\omega_{i,\Im}$ are not independent, we can formulate the following inequality:
\begin{equation}\label{eq:omegaRI}\arraycolsep=1pt
\begin{array}{rl}
& P\big((\abs{\omega_{i,\Re}} \leq \alpha(\Sigma_{\text{est,rect},\Re})_{i,i}^\frac{1}{2}) \cap (\abs{\omega_{i,\Im}} \leq \alpha(\Sigma_{\text{est,rect},\Im})_{i,i}^\frac{1}{2})\big)
\\
= & 1\hspace{-0.1cm}-\hspace{-0.1cm}
P\big((\abs{\omega_{i,\Re}} \geq \alpha(\Sigma_{\text{est,rect},\Re})_{i,i}^\frac{1}{2}) \cup
(\abs{\omega_{i,\Im}} \geq \alpha(\Sigma_{\text{est,rect},\Im})_{i,i}^\frac{1}{2})\big)
\\

= & 1 \hspace{-0.1cm}-\hspace{-0.1cm} P(\abs{\omega_{i,\Re}} \geq \alpha(\Sigma_{\text{est,rect},\Re})_{i,i}^\frac{1}{2}) 
\hspace{-0.1cm}-\hspace{-0.1cm} P(\abs{\omega_{i,\Im}} \geq \alpha(\Sigma_{\text{est,rect},\Im})_{i,i}^\frac{1}{2}) 
\\
& + P\big(
(\abs{\omega_{i,\Re}} \geq \alpha(\Sigma_{\text{est,rect},\Re})_{i,i}^\frac{1}{2}) \cap 
(\abs{\omega_{i,\Im}} \geq \alpha(\Sigma_{\text{est,rect},\Im})_{i,i}^\frac{1}{2})
\big)
\\

\geq & 1 \hspace{-0.1cm}-\hspace{-0.1cm} P(\abs{\omega_{i,\Re}} \geq \alpha(\Sigma_{\text{est,rect},\Re})_{i,i}^\frac{1}{2}) 
\hspace{-0.1cm}-\hspace{-0.1cm} P(\abs{\omega_{i,\Im}} \geq \alpha(\Sigma_{\text{est,rect},\Im})_{i,i}^\frac{1}{2})
\end{array}
\end{equation}
Using the Gaussian distribution tables to choose $\alpha$ such that 
\begin{equation}
P(\abs{\tilde{\omega}}\geq \alpha) \leq \frac{1-\beta}{4} \mbox{, for } \tilde{\omega}\sim \mathcal{N}(0,1)
\end{equation}
then we have
\begin{equation}\arraycolsep=1pt
\begin{array}{c}
P\big((\abs{\omega_{i,\Re}} \leq \alpha(\Sigma_{\text{est,rect},\Re})_{i,i}^\frac{1}{2}) \cap (\abs{\omega_{i,\Im}} \leq \alpha(\Sigma_{\text{est,rect},\Im})_{i,i}^\frac{1}{2})\big) \\ \geq \frac{1+\beta}{2}
\end{array}
\end{equation}

Using \eqref{eq:Pmax} and \eqref{eq:Pmin} we finally obtain:
\begin{equation}\arraycolsep=1pt
\begin{array}{l}
P(\abs{V_i}^2 \leq \abs{V}_{\max}^2) \geq \frac{1+\beta}{2}\\[0.1cm]
P(\abs{V}_{\min} \leq \Re \{V_i\} \frac{\Re \{V_{\text{est},i}\}}{\abs{V_{\text{est},i}}} + \Im \{V_i\} \frac{\Im \{V_{\text{est},i}\}}{\abs{V_{\text{est},i}}}) \geq \frac{1+\beta}{2}
\end{array}
\end{equation}
so that \eqref{eq:stochconstrsep} and \eqref{eq:convexconstrmin} are satisfied and so is \eqref{eq:stochconstr}.
\end{proof}

\begin{rem}
The convex approximation in \eqref{eq:convexconstrmin} may cause a loss of optimality in the OPF problem. However, it helps to avoid that the phase angle of the voltages deviates too much from the base voltages, which is also not desirable.
\end{rem}

\rb{The advantage of using our approach to deal with the chance constraints \eqref{eq:stochconstr} is that we separate the SE process from the OPF problem. Real-time sensors can be efficiently managed by the SE process \cite{picallo2017twostepSE}, and during the real-time operation of the network, we avoid sampling and/or computing convex approximations of the chance constraints using \rb{the conditional value at risk} as in \cite{dall2017chance}.}

\subsection{Final OPF}
Using Thm. \ref{thm:vlim} we can rewrite \eqref{eq:stochconstr} into a convex deterministic constraint, so that it can be integrated into our OPF problem:
\begin{defi} OPF with linear power flow approximation and stochastic voltage limits:
\begin{subequations}\label{eq:fopf}
\begin{align}\arraycolsep=1pt
&\begin{array}{l}
\mbox{Objective: }
\min \sum_{\phi} P_{\text{src},\phi} + Q_{\text{src},\phi}
\end{array}\label{eq:fopfobj}
\\
&\begin{array}{l}
\mbox{Constraints:}\\
\mbox{Power flow:}
\\ \;
\left[\begin{array}{c} 
\Delta P_\text{src} \\
\Delta P \\
\Delta Q_\text{src} \\ 
\Delta Q 
\end{array}\right] = M
\left[\begin{array}{c} 
\Re\{\Delta V_\text{src}\} \\ 
\Re\{\Delta V\} \\
\Im\{\Delta V_\text{src}\} \\ 
\Im\{\Delta V\}
\end{array}\right], 
\\[0.8cm] \;
\left[\begin{array}{c} 
\Re\{\Delta V_{\text{tf2}}\} \\
\Im\{\Delta V_{\text{tf2}}\} 
\end{array}\right] = 
\\[0.3cm] \;
a_\text{tap,prev}
\left[\begin{array}{c} 
\Re\{\Delta V_{\text{tf1}}\} \\
\Im\{\Delta V_{\text{tf1}}\} 
\end{array}\right] + 
\Delta a_\text{tap}
\left[\begin{array}{c} 
\Re\{V_{\text{tf1},\text{prev}}\} \\ 
\Im\{V_{\text{tf1},\text{prev}}\} 
\end{array}\right], 
\\[0.3cm] \;
0 = \Delta P_{\text{tf1}} + \Delta P_{\text{tf2}}, \; 0 = \Delta Q_{\text{tf1}} + \Delta Q_{\text{tf2}}, 
\\ \;
0 = \Delta P_i \mbox{ for } i \notin \mathcal{V}_\text{src} \cup \mathcal{V}_\text{ren} \cup \mathcal{V}_\text{tf1} \cup \mathcal{V}_\text{tf2}, 
\\ \;
0 = \Delta Q_i \mbox{ for } i \notin \mathcal{V}_\text{src} \cup \mathcal{V}_\text{ren} \cup \mathcal{V}_\text{tf1} \cup \mathcal{V}_\text{tf2}
\end{array}\label{eq:fopfpf}
\\
&\begin{array}{l}
\mbox{Tap changers: }
\\ \;
a_{\text{tap},\phi} \in [a_{\text{tap},\min},a_{\text{tap},\max}], \; \forall \phi \in \{1,2,3\}
\end{array}\label{eq:fopftap}
\\
&\begin{array}{l}
\mbox{Available energy:}
\\ \;
P_{\min,i} \leq \Delta P_i + P_{\text{prev},i} \leq P_{\max,i}, \; \forall i \in \mathcal{V}_\text{ren}
\\ \;
Q_{\min,i} \leq \Delta Q_i + Q_{\text{prev},i} \leq Q_{\max,i}, \; \forall i \in \mathcal{V}_\text{ren} 
\\ \;
\abs{\Delta S_i+ S_{\text{prev},i}} \leq \abs{S}_{\max,i}, \; \forall i \in \mathcal{V}_\text{ren}
\end{array}\label{eq:fopfenergy}
\\
&\begin{array}{l}
\mbox{Voltage limits: } \forall i \in \{1,\ldots,N\}
\\ \;
(\Re \{\Delta V_i\} + \Re \{V_{\text{est},i}\} \pm \alpha (\Sigma_{\text{est,rect},\Re})_{i,i}^\frac{1}{2})^2 
\\ \;
+(\Im \{\Delta V_i\} + \Im \{V_{\text{est},i}\} 
\pm \alpha (\Sigma_{\text{est,rect},\Im})_{i,i}^\frac{1}{2})^2 \leq \abs{V}_{\max}^2 \\ \;
\mbox{and} \\ \;
(\Re \{\Delta V_i\} + \Re \{V_{\text{est},i}\} 
\pm \alpha  (\Sigma_{\text{est,rect},\Re})_{i,i}^\frac{1}{2}) \frac{\Re V_{\text{est},i}}{\abs{V_{\text{est},i}}} 
\\ \;
+(\Im \{\Delta V_i\} + \Im \{V_{\text{est},i}\} 
\pm \alpha (\Sigma_{\text{est,rect},\Im})_{i,i}^\frac{1}{2}) \frac{\Im V_{\text{est},i}}{\abs{V_{\text{est},i}}} \geq \abs{V}_{\min}
\end{array}\label{eq:fopfvlim}
\end{align}
\end{subequations}
\end{defi}
where now the variables to control are $\Delta a_\text{tap}$, $\{(\Delta P_i, \Delta Q_i) \mid i \in \mathcal{V}_\text{ren}\}$ and $\Delta V_\text{src}$, while $\Delta P_\text{src}$, $\Delta Q_\text{src}$ and $\Delta V$ are determined by the constraints. All previous values before the optimization step, $a_\text{tap,prev}, P_{\text{prev},i}, Q_{\text{prev},i}, S_{\text{prev},i}$, are stored from the control step before the current one, and $V_\text{est}$ and $\Sigma_\text{est,rect}$ are given by the SE.
\begin{rem}
Note that the OPF problem in \eqref{eq:fopf} is convex, and thus can be solved optimally and efficiently using state-of-the-art convex-optimization algorithms \cite{boyd2004convex}.
\end{rem}

\section{Case Study}\label{sec:sim}
A simulation of 24 hours with 15 min intervals is run on a test case to test the effectiveness of the methodology. Here we describe the settings of the test case and analyze the results. The algorithms are coded in Python and run on an Intel Core i7-5600U CPU at 2.60GHz with 16GB of RAM.
\subsection{Settings}\label{subsec:simDef}
\begin{itemize}
\item System: We use the 123-bus test feeder available online \cite{kersting1991radial, testfeeder}, see Fig. \ref{fig:123bus}. 
\item Measurements for the SE (see Fig. \ref{fig:123bus}): As in \cite{picallo2017twostepSE}, voltage measurements (red circle for phasor, red square for magnitude only) are placed at buses $95$ and $83$, current measurements (blue dashed circle for phasor, blue dashed square for magnitude only) at buses $65$ and $48$, and branch current phasor measurements (blue dashed arrow) at branch $150$ (after the regulator) $\to 149$. 
\item Load Profiles: As in \cite{picallo2017twostepSE} for the SE, the load profiles are built by aggregating a several households profiles.
\item Distributed Generation: Solar energy is introduced in the three phases of nodes $49$ and $65$, and wind energy in nodes $76$ and $30$, see Fig \ref{fig:123bus} (a yellow rhombus for solar, a grey parallelogram for wind). The profiles can be seen in Fig. \ref{fig:energyused}. They are simulated using a solar irradiation profile and a wind speed profile from \cite{solarprofile, windprofile}. We use these profiles to determine for each time step the apparent power limit $\abs{S}_{\max,i}$ for $i \in \mathcal{V}_\text{ren}$ in \eqref{eq:fopfenergy}. 
\item Tap changers: We control the transformer located in the branch $160\to 67$, see Fig. \ref{fig:123bus}, and set the tap changers limits at $a_{\text{tap},\min}=0.9,a_{\text{tap},\max}=1.1$, as in \cite{robbins2016optimal}.
\item Voltage limits: The common values $\abs{V}_{\max} = 1.05 \mathrm{ p.u.}$ and $\abs{V}_{\min} = 0.95 \mathrm{ p.u.}$ are used for the voltage magnitude constraints \cite{robbins2016optimal, anese2016optimal}.
\item Other values: The probability limit for the stochastic constraints \eqref{eq:stochconstr} is chosen to be the standard value $\beta = 95\%$, resulting in $\alpha \approx 2.5$. 
\end{itemize}

\begin{figure}
\centering
\includegraphics[width=9cm,height=7.3cm]{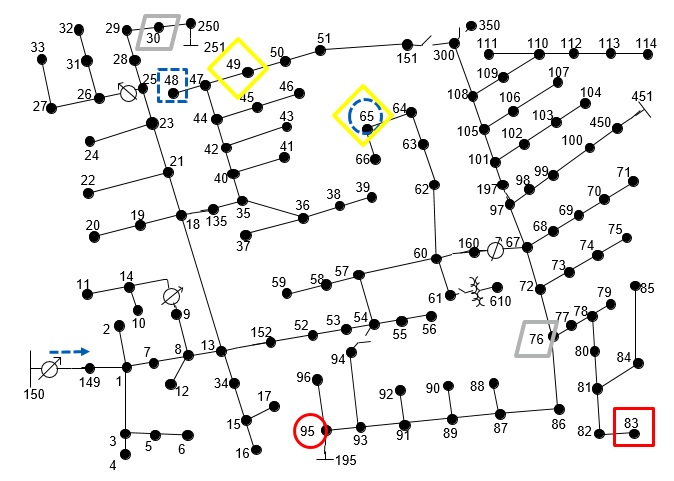}    
\caption{123-bus test feeder with location of measurements and distributed generation marked. The network image has been taken from \cite{testfeeder}.} 
\label{fig:123bus}
\end{figure}


\subsection{Results}\label{subsec:simRes}
To prove the effectiveness of our approach, we compare the resulting voltage magnitudes when controlling the transformer and the introduced energy in two cases: in case 1, we take the uncertainty into account, using the covariance to ensure the voltage limit constraints; while in case 2, we are using the voltage estimates as if they were the true values, without taking into account the covariance. It can be observed in Fig. \ref{fig:Vboth} that case 1, using the covariance, performs much better than case 2 in controlling the voltage magnitudes within their limits.

From the case 1 using the covariance, we can also observe in Fig. \ref{fig:tap} how the tap changers remain within the limits, and their value changes to optimize the operation of the grid. Furthermore, we can also observe in Fig. \ref{fig:energyused} how the whole energy available is not always fully used. This happens because otherwise some voltage constraints would be violated. Precisely, in Fig. \ref{fig:Vboth} it can be observed that the instants at which some of the available energy is not used, coincide with some voltage magnitudes being at the limit.

Moreover, note that the load profiles introduced in the SE are not necessarily normally distributed, and thus the SE estimate may have a non-Gaussian probability distribution. However, our approach in case 2 still succeeds in keeping the voltages within limits.


\begin{figure}
\centering
\includegraphics[width=9cm,height=5.5cm]{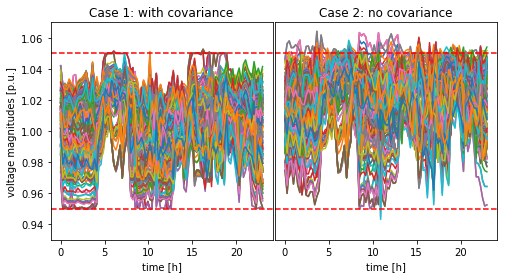}    
\caption{Voltage magnitude $\abs{V}$ profiles for all nodes along the day for case 1 and 2: with and without taking into account the covariance of the SE estimate for the OPF respectively. The red dashed lines represent the limits.} 
\label{fig:Vboth}
\end{figure}

\begin{figure}
\centering
\includegraphics[width=2.66in,height=2in]{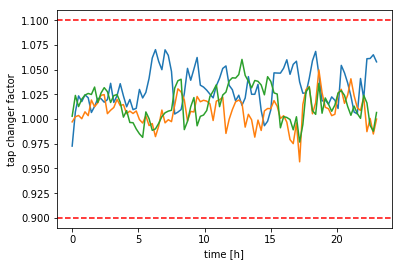}    
\caption{Values of the tap changers at the transformers for the 3 phases. The red dashed lines represent the limits.} 
\label{fig:tap}
\end{figure}

\begin{figure}
\centering
\includegraphics[width=8.5cm,height=8.5cm]{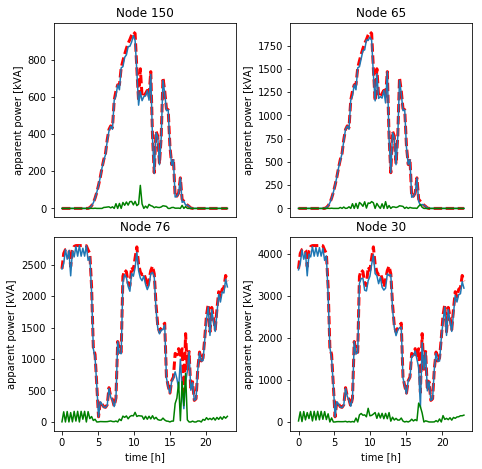}    
\caption{Profile of available energy at every node (thick red dashed line), the actual energy taken (blue line), and the difference (green).} 
\label{fig:energyused}
\end{figure}



\section{Conclusions}\label{sec:conc}
In this work, we have presented a methodology to combine the state estimation (SE) with the optimal power flow (OPF) problem for a distribution grid where only a few measurements are available. The lack of sensors produces uncertain voltage state estimates, and therefore we have adapted the standard OPF to include stochastic constraints for the voltage magnitude levels. Moreover, we use a linear approximation of the power flow to convexify the problem and bypass the lack of load measurements using delta increments. We also transform the stochastic constraints to make them tractable for our problem, by using the probability distribution of the state estimation. Finally, we prove through a case study, that the proposed methodology succeeds in controlling a large distribution grid with some controllable elements, like transformers and distributed generation, while respecting the voltage constraints. 

Future work could include adding controllable elements with dynamics, like batteries, or flexible loads, like electrical vehicles. \rb{Additionally, we could also consider the discrete nature of the tap changer values in the transformer, as well as other probability distributions for the voltage SE.}

\bibliographystyle{ieeetr}
\bibliography{ifacconf}

%

\end{document}